\newtheorem{lemma}{Lemma}%
\DeclareMathOperator*{\argmax}{argmax}
\title{Delay-Aware Dynamic Resource Management for High-Speed Railway Wireless Communications}
\author{Shengfeng Xu, Gang Zhu, Chao Shen, Shichao Li, Zhangdui Zhong\\State Key Laboratory of Rail Traffic Control and Safety\\Beijing Jiaotong University, Beijing, P. R. China
\\Email:$\{$11120177, gzhu, shenchao, 14111022, zhdzhong$\}$@bjtu.edu.cn}
\begin{document}

\maketitle

\begin{abstract}
In this paper, we investigate the delay-aware dynamic resource management problem for multi-service transmission in high-speed railway wireless communications, with a focus on resource allocation among the services and power control along the time. By taking account of average delay requirements and power constraints, the considered problem is formulated into a stochastic optimization problem, rather than pursuing the traditional convex optimization means. Inspired by Lyapunov optimization theory, the intractable stochastic optimization problem is transformed into a tractable deterministic optimization problem, which is a mixed-integer resource management problem. By exploiting the specific problem structure, the mixed-integer resource management problem is equivalently transformed into a single variable problem, which can be effectively solved by the golden section search method with guaranteed global optimality.
Finally, we propose a dynamic resource management algorithm to solve the original stochastic optimization problem. Simulation results show the advantage of the proposed dynamic algorithm and reveal that there exists a fundamental tradeoff between delay requirements and power consumption.
\end{abstract}

\section{Introduction}\label{sec:1}
For the last decade, high-speed railway (HSR) has become the future trend of railway transportation worldwide, and attracted a lot of attentions as a fast, convenient and green public transportation system.
With the continuous construction of HSR in recent years, the demand for mobile communication on high-speed trains is increasingly growing \cite{Ai-2014}.
More and more services related with the railway controlling information need to be transmitted between the train and the ground in order to guarantee the train moving safety.
Meanwhile, the passengers have an increasingly high demand on wireless Internet services when they are onboard.
To fulfill the high demand for wireless data transmission, the study on efficiency of HSR communications is critical.

There have been some recent works to improve the transmission performance in HSR communication systems.
From the network architecture perspective, a relay-assisted HSR network architecture has been proposed in \cite{Wang-2012} and \cite{Tian-2012}, which can provide better performance than direct transmission in case of large penetration loss.
To better utilize the network resources, \cite{Yan-2015} and \cite{Liang-2012} considered control/data signaling decoupled and cellular/infostation integrated HSR network architectures, respectively.
From the transmission technology perspective, the radio-over-fiber (RoF) technology for HSR communications was proposed in \cite{Lannoo-2007}, which can improve handover performance effectively.
Multi-input Multi-output technology (MIMO) was introduce into HSR scenarios in order to increase the network throughput \cite{Luo-2013}.
However, when considering the multiple services transmission between the train and the ground, more investigations on resource management are necessary to further improve the transmission performance.

In HSR communications, many types of services need to be transmitted between the train and the ground \cite{Pareit-2012}.
In particular, these HSR services are classified into four categories \cite{Brussel-2010}, i.e., pure passenger internet, passenger comfort services, security-related services and cost saving applications.
The effective transmission for these heterogenous services is a technical challenge.
First, the channel condition cannot remain at the same level due to the fast-varying distance  between the train and the ground, which causes that the power control along the time has a large influence on transmission performance.
Second, there exist heterogeneous quality-of-service (QoS) requirements in HSR communications, especially the end-to-end delay requirements since the security-related services should be delivered in time. The resource allocation plays a key role in enhancing the QoS performance by making full use of the limited resources.
Based on the above two aspects, the power control along the time and resource allocation among the delay-aware services in HSR wireless communications are still interesting and challenging problems.

To the best of our knowledge, resource allocation and power control in HSR communications are usually considered as separate problems.
In this paper, we jointly optimize them for delay-aware multi-service transmission in HSR communications.
Specifically, the main contributions are summarized as follows.
\begin{itemize}
  \item A stochastic optimization framework for multi-service transmission in HSR communication systems is developed, which focuses on dynamic resource management under the heterogeneous delay requirements and power constraints. The proposed framework is based on a cross-layer design to improve the efficiency of resource management, which involves the interactions between physical (PHY) layer and media access control (MAC) layer.
  \item Inspired by the stochastic network optimization approach, the intractable stochastic optimization problem is transformed into a tractable deterministic optimization problem. A static resource management algorithm is proposed to solve it with guaranteed global optimality, by using the golden section search method. Based on the static algorithm, we propose a dynamic resource management algorithm to solve the original stochastic optimization problem.
  \item The algorithm performance is evaluated by simulations under realistic assumptions for HSR communication systems. Simulation results show that compared with the traditional power control schemes, the proposed dynamic algorithm can effectively improve delay performance. In addition, we notice there exists a fundamental tradeoff between delay requirements and power consumption.
\end{itemize}

The remainder of the paper is structured as follows.
In Section \ref{sec:2}, we review the related works.
Section \ref{sec:3} describes the system model.
The problem formulation and transformation are provided in Section \ref{sec:4}.
We propose a dynamic resource management algorithm in Section \ref{sec:5}.
Numerical results and discussions are shown in Section \ref{sec:6}.
Finally, some conclusions are drawn in Section \ref{sec:7}.

Notations: In this paper, $\mathbb{E}[\cdot]$ denotes expectation. $\lfloor x \rfloor = \max \{n\in \mathbb{Z} | n \leq x\}$. $\lceil x \rceil = \min \{n\in \mathbb{Z} | n \geq x\}$. $\mathbb{R}$, $\mathbb{Z}$ and $\mathbb{N}$ denote the sets of real numbers, all integers and all positive integers, respectively.

\section{Related Work} \label{sec:2}

\subsection{Resource Allocation}
Resource allocation plays an important role in enhancing the data transmission efficiency and improving the QoS performance.
In the literature, the resource allocation problem in HSR communications has attracted great research interest.
For example, \cite{Zhao-2013} and \cite{Karimi-2012} investigated rate maximization resource allocation problem under the limited resource constraint.
Different energy-efficiency resource allocation methods were developed in \cite{Zhu-2012} and \cite{Zhao-2012} to minimize the total transmit power while satisfy QoS requirements.
However, a typical assumption in these works is the infinite backlog and the delay-insensitive services.
As a result, these works focus only on optimizing the PHY layer performance metrics such as sum throughput and total transmit power, and the resultant resource allocation schemes are adaptive to the channel condition only.

In practical HSR communications, it is important to focus on cross-layer optimization design, which considers random bursty arrivals and delay performance metrics in addition to the PHY layer performance metrics.
There is also plenty of literature on cross-layer resource optimization in HSR communications.
\cite{Liang-2012} and \cite{Chen-2014} investigated the resource allocation problem for delivering multiple on-demand services while considering their deadline constraints.
A cross-layer design approach was proposed in \cite{Zhu-2011} to improve video transmission quality by jointly optimizing application-layer parameters and handoff decisions.
In addition, \cite{Xu-2014-DADS} and \cite{Xu-2014-ICC} studied the downlink resource allocation problem with the delay constraint and packet delivery ratio requirement in relay-assisted HSR communications.
All the above works treat the resource allocation problem with the assumption of a constant transmit power.
When delivering multiple services between the ground and the train, the total resource allocated to the services is controlled by the instantaneous transmit power.
Thus, it is necessary to jointly consider the resource allocation among the services and the power control along the time.

\subsection{Power Control}
There are three unique features in HSR communications \cite{Lin-2012}, i.e., the deterministic moving direction, relatively steady moving speed and the accurate train location.
The data transmission rate is highly determined by the transmit power and the distance between the ground and the train, thus these features make it necessary and feasible to implement power control along the time.
Under the total power constraint, \cite{Dong-2014} presented four power allocation schemes to achieve different design objectives.
As an extension, \cite{Xu-2014-JWCN} investigated the utility-based resource allocation problem, which jointly considers the power allocation along the time and packet allocation among the services.
The delay-aware power allocation policy has been proposed in \cite{Zhang-2015} under the assumption of constant-rate data arrival.
Moreover, \cite{Thai-2013} and \cite{Ma-2012} studied the energy-efficient data transmission problem in HSR communications, with the purpose of minimizing the total transmit power.
However, these above works only take account of the time-varying channel state while do not consider the dynamic characteristics of the service or packet arrivals, which causes that the above power control schemes are not practical.

Dynamic power control is necessary to improve the performance of HSR communication systems, where the transmit power should be adaptive to the time-varying channel state and queue state.
The work \cite{Xu-2014-ACRA} investigated a joint admission control and resource allocation problem, which aims to maximize the system utility while stabilizing all transmission queues.
Different from \cite{Xu-2014-ACRA}, our work in this paper focuses on the delay-aware dynamic resource allocation and power control problem under power constraints.

\section{System Model}\label{sec:3}

As shown in Fig. \ref{SystemModel-DC}, we consider a HSR communication network consisting of a linear cellular network and a backbone network.
The linear cellular network deployed near the rail line can provide data transmission between the ground and the train.
In the backbone network, the distributed content servers (CSs) are deployed in order to offload data traffic \cite{Spagna-2013} and a central controller (CC) is responsible for resource management \cite{Liang-2012}.
The base stations (BSs) in the cellular network are connected to the CSs via wireline links, thus BSs and CSs can communicate with a negligible delay.
Considering the downlink transmission from the ground to the train, the data packets of the requested services are first delivered from the CSs to the vehicle station via the BSs, and then the vehicle station installed on the train transfers these data packets to the users on the train.
Since the communication between the ground and the vehicle station suffers from the fast-varying wireless channel and may become the bottleneck in this network architecture.
Therefore, this paper mainly considers the multi-service downlink transmission from the BSs to the vehicle station.

\begin{figure}[!htb]
    \centering
    \setlength{\abovecaptionskip}{0cm}
    \includegraphics[scale = 0.58]{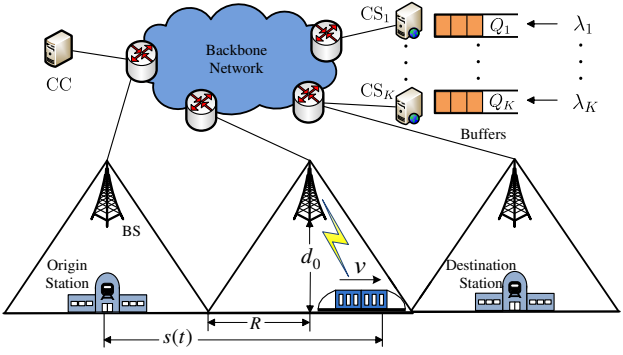}
    \caption{System model}\label{SystemModel-DC}
\end{figure}

\subsection{Deterministic Train Trajectory}
The deterministic train trajectory in HSR communication systems is a unique feature, which represents the train location at a specific time \cite{Liang-2012}.
The train trajectory information can be obtained accurately due to the following two reasons.
First, since the train moves on a predetermined rail line and the velocity is relatively steady.
Second, many train positioning techniques are applied into railway communications, such as the global positioning system (GPS) and digital track maps \cite{Pascoe-2009}.

In this paper, we develop a time-distance mapping function to describe the deterministic train trajectory.
As shown in Fig. \ref{SystemModel-DC}, we consider a train travels from the origin station to the destination station within the time duration $[0,T]$ at a constant speed $v$.
The whole time duration is partitioned into slots of equal duration $T_{s}$.
The distance which the train has traveled until slot $t$ is $s(t) = vt$ and the train location between two adjacent BSs is $s(t)~{\rm mod}~2R$, where $R$ is the cell radius. Define a time-distance mapping function $d(t): [0,T] \rightarrow [d_{0},d_{\rm max}]$, where $d_{\rm max} = \sqrt{R^{2} + d_{0}^{2}}$ and $d_{0}$ is the distance between each BS and the rail line. With the help of the train trajectory information, the distance $d(t)$ at slot $t$ can be obtained based on the geometry knowledge.

\subsection{PHY Layer Model}
The newly-built HSR routes are mainly composed of wide plain and viaduct, which yield a free space with few reflectors or scatterers.
Most of the time, only the direct signal path between BS and vehicle station is available, which was confirmed by engineering measurements \cite{Liu-2012, He-2011}.
Similar to \cite{Dong-2014, Zhang-2015, Zhang-2013}, we assume that the channel condition variation results only from the time-varying distance between BS and train.
Given the transmit power $P(t)$ and the distance $d(t)$,
the received signal-to-noise ratio (SNR) at slot $t$ can be denoted by
\begin{equation}\label{SNR}
  {\rm SNR}(t) = \frac{P(t)d^{-\alpha}(t)}{BN_{0}} = \frac{P(t)}{N(t)},
\end{equation}
where $B$ is the system bandwidth, $N_{0}$ is the noise power spectral density, $\alpha$ is the pathloss exponent, and
$N(t)$ is defined as $N(t) \triangleq BN_{0}d^{\alpha}(t)$ for brevity.

Based on the SNR expression \eqref{SNR}, the downlink transmission rate at slot $t$ is expressed by
\begin{equation}\label{transmission rate}
  R(t) = B\log_{2}\left(1+\frac{P(t)}{N(t)}\right)~\rm bits/s.
\end{equation}
Suppose that a packet is the transmission unit, hence the link capacity at slot $t$ can be denoted as the maximum number of packets, which can be expressed by
\begin{equation}\label{transmission rate 2}
  C(t) = \left\lfloor\frac{T_{s}R(t)}{L}\right\rfloor = \left\lfloor\frac{1}{\eta}\log_{2}\left(1+\frac{P(t)}{N(t)}\right)\right\rfloor,
\end{equation}
where $L$ is the packet size in bits, $\eta = \frac{L}{T_{s}B} > 0$, and $T_{s}$ is a slot duration.
Based on \eqref{transmission rate 2}, we can see that the link capacity $C(t)$ is determined by transmit power $P(t)$. Although the equal packet size and equal time duration at each slot are considered, the results presented herein can be extended to the unequal scenarios.

We consider the erasure coding based service transmission in PHY layer, which has been adopted in \cite{Liang-2012, Xu-2014-ICC} to simplify the protocol design for HSR communications.
The advantage is that no recovery scheme is required for the transmission error or loss of specific packets due to highly dynamic wireless channels.

\subsection{MAC Layer Model}
A set $\mathcal{K} \triangleq \{1,\ldots,K\}$ of delay-constrained services are supported over the trip.
We assume that ${\rm CS}_{k}$ is equipped with a buffer and can provide service $k$. Thus, we can see $K$ delay-constrained queues in MAC layer, as shown in Fig. \ref{SystemModel-DC}.
The maximum size of each buffer $Q_{{\rm max}}$ is assumed to be sufficiently large.
Let $\mathbf{Q}(t) = (Q_{1}(t),\ldots,Q_{K}(t))$ represent the current queue backlogs vector, where $Q_{k}(t)$ denotes the number of packets at the beginning of slot $t$ in the queue of ${\rm CS}_{k}$.
Let $\mathbf{A}(t) = (A_{1}(t),\ldots,A_{K}(t))$ represent the packet arrival vector, where $A_{k}(t)$ denotes the number of packets arriving into the buffer of ${\rm CS}_{k}$ at slot $t$.
Suppose in general, $A_{k}(t)$ follows Poisson distribution with average packet arrival rate $\lambda_{k}$ for service $k$.

The MAC layer is responsible for the resource allocation among the services.
Let $\boldsymbol{\mu}(t) = (\mu_{1}(t),\ldots,\mu_{K}(t))$ be the resource allocation action vector at slot $t$, where $\mu_{k}(t)$ is the number of packets allocated to service $k$.
Since the total number of allocated packets can not exceed the link capacity, the resource allocation at each slot $t$ must satisfy the constraint $0 \leq \sum_{k}\mu_{k}(t) \leq C(t)$.
In addition, the dynamics for all queues are given by
\begin{equation}\label{real queue dynamics}
  Q_{k}(t+1) = Q_{k}(t) - \mu_{k}(t) + A_{k}(t),~\forall k \in \mathcal{K}.
\end{equation}
Since the arrival packets at slot $t$ can only be transmitted after slot $t$, we have $0 \leq \mu_{k}(t) \leq Q_{k}(t),~ \forall k\in \mathcal{K}$.

\section{Problem Formulation and Transformation}\label{sec:4}
In this section, we first present a detailed description of the delay-aware multi-service transmission from the perspective of cross-layer design.
The average delay constraint and average power constraint are formulated in terms of long-term time average.
Then auxiliary variables are introduced to transform these long-term average constraints into the queue stability constraints.
Based on the Lyapunov drift theory, we formulate the delay-aware resource allocation and power control problem as a stochastic optimization problem.
Finally, the intractable stochastic optimization problem is transformed into a tractable deterministic optimization problem.

\subsection{Problem Statement}
This paper considers the delay-aware multi-service transmission in HSR communication systems,
with a focus on dynamic resource allocation and power control problem.
Based on the model in Section \ref{sec:3}, the problem can be stated as follows:
During a trip, considering system dynamic characteristics, i.e., the random packet arrivals and time-varying wireless channels, how to dynamically optimize resource allocation and power control to satisfy the heterogenous delay requirements of multiple services under power constraints along the time.

To enhance the efficiency of resource utilization and improve delay performance of service transmission, it is necessary to dynamically control resource in a cross-layer way.
Fig. \ref{CrossLayer} presents an illustration of cross-layer resource management, which involves the interactions between the PHY layer and the MAC layer.
At the PHY layer, the channel state information (CSI) allows an observation of good transmission opportunity.
At the MAC layer, the queue state information (QSI) provides the urgency of data packets.
The control actions, including power control action $P$ and resource allocation action vector $\boldsymbol{\mu}$, should be taken dynamically based on the PHY layer CSI and the MAC layer QSI.
Specifically, the power control action decides the link capacity, i.e., the total allocated packets. The resource allocation action decides how many packets are allocated for each service.

\begin{figure}[!h]
    \centering
    \includegraphics[scale = 0.53]{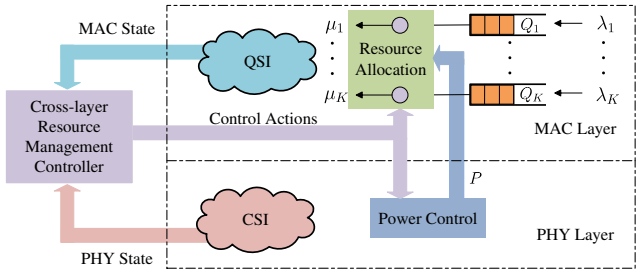}
    \caption{Cross-layer design for dynamic resource management }\label{CrossLayer}
\end{figure}

\subsection{Constraint Formulation}
We define the following notation for the long-term time average expectation of any quantity $x$,
\begin{equation}\label{long-term time average expectation}
  \overline{x} \triangleq \lim_{t\rightarrow\infty}\frac{1}{t}\sum_{\tau=0}^{t-1} \mathbb{E}[x(\tau)],
\end{equation}
Based on the definition \eqref{long-term time average expectation}, $\overline{Q}_{k}$ and $\overline{P}$ are denoted as average queue backlog for queue $k$ and average power consumption, respectively.

For the erasure coding based service delivery, the average delay constraint is considered since the decoding delay will be closely related to all encoded packets.
Mathematically, the average delay constraint for queue $k$ can be expressed by
\begin{equation}\label{average delay constraint}
  \overline{W}_{k} \leq W_{k}^{{\rm av}},
\end{equation}
where $\overline{W}_{k}$ and $W_{k}^{{\rm av}}$ represent the average delay and maximum average delay for queue $k$, respectively.
Based on Little's law, the average delay can be obtained by $\overline{W}_{k} = {\overline{Q}_{k}\over\lambda_{k}}$.
Thus, the constraint \eqref{average delay constraint} is equal to
\begin{equation}\label{average queue constraint}
  \overline{Q}_{k} \leq W_{k}^{{\rm av}} \lambda_{k}.
\end{equation}

The data transmission between the train and ground is subject to the transmit power constraints, including the maximum power constraint and average power constraint.
Mathematically, the maximum power constraint at any slot $t$ is given by $P(t) \leq P_{{\rm max}}$ and the average power constraint can be expressed by
\begin{equation}\label{average power constraint}
  \overline{P} \leq P_{{\rm av}},
\end{equation}
where $P_{{\rm av}}$ and $P_{{\rm max}}$ denote the maximum average power and the maximum instantaneous power, respectively.

The objective of this paper is to investigate the dynamic resource allocation and power control problem under the average delay constraints \eqref{average delay constraint}, the average power constraint \eqref{average power constraint} and the maximum power constraint.
In order to better characterize the considered problem, we consider the constraint transformation in the following subsection.

\subsection{Constraint Transformation}
To facilitate satisfaction of the constraint \eqref{average queue constraint}, we define a virtual queue $X_{k}(t)$ for each $k$ with the update equation
\begin{align}\label{virtual queue 2}
  X_{k}(t+1)  = \max[X_{k}(t) - W_{k}^{{\rm av}}\lambda_{k}, 0] + Q_{k}(t+1)
\end{align}
where $Q_{k}(t+1)$ is defined in (\ref{real queue dynamics}) and the initial condition is assumed $X_{k}(0) = 0$ for all $ k$.
Intuitively, $Q_{k}(t+1)$ and $W_{k}^{{\rm av}}\lambda_{k}$ can be viewed as the ``arrivals'' and the ``offered service'' of queue $X_{k}(t)$, respectively.

\begin{lemma}\label{L1}
If the virtual queue $X_{k}(t)$ is rate stable, i.e., satisfies $\lim\limits_{t \rightarrow \infty} \frac{X_{k}(t)}{t} = 0$, then $\overline{W}_{k} \leq W_{k}^{{\rm av}}$ holds and the queue $Q_{k}(t)$ is stable.
\end{lemma}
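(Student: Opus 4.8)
The plan is to exploit the sample-path recursion for the virtual queue $X_k(t)$, turning its rate stability into a time-average bound on the real backlog $Q_k(t)$, and then to finish the delay claim via Little's law.

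First I would drop the projection in \eqref{virtual queue 2}: since $\max[X_k(t)-W_k^{\rm av}\lambda_k,\,0] \geq X_k(t)-W_k^{\rm av}\lambda_k$, the update implies
\begin{equation}
X_k(t+1) - X_k(t) \geq Q_k(t+1) - W_k^{\rm av}\lambda_k .
\end{equation}
Summing this over $\tau = 0,1,\ldots,t-1$ telescopes the left-hand side, and using the initial condition $X_k(0)=0$ gives
\begin{equation}
X_k(t) \geq \sum_{\tau=1}^{t} Q_k(\tau) \;-\; t\,W_k^{\rm av}\lambda_k ,
\end{equation}
so that $\frac{1}{t}\sum_{\tau=1}^{t} Q_k(\tau) \leq W_k^{\rm av}\lambda_k + \frac{X_k(t)}{t}$. (The shift between this partial sum and the one in \eqref{long-term time average expectation} is immaterial: from \eqref{virtual queue 2} one also has $Q_k(t) \leq X_k(t)$, so the boundary term vanishes once divided by $t$.)

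Next I would take expectations and let $t \to \infty$. The left-hand side tends to $\overline{Q}_k$ by \eqref{long-term time average expectation}, while $X_k(t)/t \to 0$ by the rate-stability hypothesis; hence $\overline{Q}_k \leq W_k^{\rm av}\lambda_k$, which is precisely \eqref{average queue constraint}. Applying Little's law $\overline{W}_k = \overline{Q}_k/\lambda_k$ then yields $\overline{W}_k \leq W_k^{\rm av}$, i.e.\ \eqref{average delay constraint}. Finally, since $\overline{Q}_k$ is bounded above by the finite constant $W_k^{\rm av}\lambda_k$, the queue $Q_k(t)$ has finite time-average backlog and is therefore stable.

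The step I expect to need the most care is the passage to the limit under the expectation --- equivalently, deducing $\mathbb{E}[X_k(t)]/t \to 0$ (and the convergence of the Ces\`aro average of $\mathbb{E}[Q_k(\tau)]$) from the rate-stability assumption. This is resolved either by carrying expectations through the whole argument and using the expected form of rate stability, or by a nonnegativity/monotone-convergence argument; everything else is the routine telescoping above.
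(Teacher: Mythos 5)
Your proof is correct and follows essentially the same route as the paper's: drop the $\max[\cdot,0]$ to get the one-step inequality, telescope, divide by $t$, invoke rate stability to conclude $\overline{Q}_k \leq W_k^{\rm av}\lambda_k$, and finish with Little's law. The only difference is that you are more explicit about the index shift and the passage to the limit under the expectation (using $Q_k(t)\leq X_k(t)$ to kill the boundary term), points the paper's proof glosses over.
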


\begin{proof}
From \eqref{virtual queue 2}, we have
$X_{k}(\tau+1) \geq X_{k}(\tau) - W_{k}^{{\rm av}}\lambda_{k} + Q_{k}(\tau+1)$,
i.e., $X_{k}(\tau+1) - X_{k}(\tau) \geq Q_{k}(\tau+1) - W_{k}^{{\rm av}}\lambda_{k}$ for any $\tau$.
Summing the above over $\tau \in \{0,\ldots, t-1\}$ yields $X_{k}(t) - X_{k}(0) \geq \sum_{\tau=0}^{t-1} (Q_{k}(\tau+1) - W_{k}^{{\rm av}}\lambda_{k})$. Dividing by $t$ and taking limit as $t \rightarrow \infty$, we can get $\lim_{t \rightarrow \infty} X_{k}(t)/t \geq \overline{Q}_{k} - W_{k}^{{\rm av}}\lambda_{k}$. Thus, if $\lim_{t \rightarrow \infty} X_{k}(t)/t$ $ = 0$, then $\overline{Q}_{k} \leq W_{k}^{{\rm av}}\lambda_{k}$ holds, which implies the queue $Q_{k}(t)$ is stable.
In addition, based on Little's law, we have $\overline{Q}_{k} = \lambda_{k}\overline{W}_{k}$ and $\overline{W}_{k} \leq W_{k}^{{\rm av}}$ for $k \in \mathcal{K}$.   
\end{proof}

The intuition behind Lemma \ref{L1} is that if the excess backlog in the virtual queue is stabilized, it must be the case that the time average arrival rate $\overline{Q}_{k}$ is not larger than the service rate $W_{k}^{\rm av}\lambda_{k}$.
Based on the Lemma \ref{L1}, the constraint (\ref{average delay constraint}) can be transformed into a single queue stability problem.

Similarly, for the constraint (\ref{average power constraint}), we define the virtual queue $Y_{k}(t)$ for each $k$ with the update equation
\begin{equation}\label{virtual queue 3}
  Y_{k}(t+1) = \max[Y_{k}(t) - P_{{\rm av}}, 0] + P(t).
\end{equation}
Thus, stabilizing $Y_{k}(t)$ ensures $\overline{P} \leq P_{{\rm av}}$.

\subsection{Problem Formulation}
Define $\mathbf{X}(t)$ and $\mathbf{Y}(t)$ as a vector of all virtual queues $X_{k}(t)$ and $Y_{k}(t)$, respectively.
We denote $\boldsymbol{\Theta}(t)$ as the combined vector of all virtual queues,
$\boldsymbol{\Theta}(t) \triangleq [\mathbf{X}(t),\mathbf{Y}(t)]$.
Define the quadratic Lyapunov function \cite{Neely-2010}
\begin{equation}\label{Lyapunov function}
  L(\boldsymbol{\Theta}(t)) \triangleq \frac{1}{2} \left( \sum_{k \in \mathcal{K}} X_{k}(t)^{2} + \omega\sum_{k \in \mathcal{K}} Y_{k}(t)^{2} \right),
\end{equation}
where $\omega \geq 0$ represents the weight on how much we emphasize the average power constraint.

Next, $\Delta(\boldsymbol{\Theta}(t))$ is defined as the one-slot conditional Lyapunov drift at slot $t$
\begin{equation}\label{Lyapunov drift}
  \Delta(\boldsymbol{\Theta}(t)) = \mathbb{E}[L(\boldsymbol{\Theta}(t+1)) - L(\boldsymbol{\Theta}(t))|\boldsymbol{\Theta}(t)],
\end{equation}
which can help to ensure that the virtual queues are stable and the desired constraints are met.
At each slot $t$, observing the virtual queue vector $\boldsymbol{\Theta}(t)$ and real queue vector $\mathbf{Q}(t)$, the resource allocation action vector $\boldsymbol{\mu}(t)$ and power control action $P(t)$ should be jointly decided to minimize the drift \eqref{Lyapunov drift}.
Thus, the resource management problem at slot $t$ is formulated as
\begin{subequations}\label{subeq:1}
\begin{align}
               \min\limits_{P(t), \boldsymbol{\mu}(t)}~~&~~\Delta(\boldsymbol{\Theta}(t))\label{maximize}\\
               \mbox{s.t.}~~&~~0 \leq P(t) \leq P_{{\rm max}}\label{max power constraint}\\
               ~~&~~0 \leq \mu_{k}(t) \leq Q_{k}(t), ~\mu_{k}(t)\in \mathbb{N}, ~\forall k \label{RA constraint 1}\\
               ~~&~~\sum_{k \in \mathcal{K}}\mu_{k}(t) \leq C(t) \label{RA constraint 2}
\end{align}
\end{subequations}

The problem \eqref{subeq:1} is a stochastic optimization problem \cite{Neely-2010}, but it cannot be solved efficiently since the difficulty from the form of the objective function (\ref{maximize}). In order to better characterize the
problem \eqref{subeq:1} and develop an efficient algorithm to solve it, we consider the problem transformation in the following subsection.

\subsection{Problem Transformation}
To make the objective function \eqref{maximize} easily handled, we have the following lemma.
\begin{lemma}\label{L2}
Under any $\boldsymbol{\mu}(t)$, $P(t)$ and $\boldsymbol{\Theta}(t)$, we have
\begin{equation}\label{L2-1}
  \Delta(\boldsymbol{\Theta}(t)) \leq {1 \over 2}D + \mathbb{E}[G(t)|\boldsymbol{\Theta}(t)],
\end{equation}
where $D$ is a finite constant defined as
\begin{equation}\label{L2-2}
  D = \sum_{k \in \mathcal{K}}\Big[Q_{{\rm max}}^{2} + (\lambda_{k}W_{k}^{{\rm av}})^{2} + \omega (P_{{\rm max}}^{2} + P_{{\rm av}}^{2}) \Big],
\end{equation}
and $G(t)$ is defined as
\begin{align}\label{L2-3}
G(t)\triangleq & \sum_{k \in \mathcal{K}} \Big[X_{k}(t)\big(Q_{k}(t) - \mu_{k}(t) + A_{k}(t) - \lambda_{k}W_{k}^{{\rm av}}\big) \nonumber \\
& + \omega Y_{k}(t)(P(t)-P_{{\rm av}}) \Big].
\end{align}
\end{lemma}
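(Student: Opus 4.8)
The plan is to bound the one-slot drift \eqref{Lyapunov drift} by expanding the squares of the virtual-queue updates term by term. The workhorse is the elementary inequality: for any update of the form $U(t+1) = \max[U(t)-b,0]+a$ with $a,b \ge 0$ and $U(t)\ge 0$, one has $U(t+1)^2 \le U(t)^2 + a^2 + b^2 + 2U(t)(a-b)$. I would first record this by noting $(\max[U(t)-b,0])^2 \le (U(t)-b)^2$ and $\max[U(t)-b,0]\le U(t)$, then expanding $U(t+1)^2 = (\max[U(t)-b,0])^2 + 2a\max[U(t)-b,0] + a^2$ and substituting these two bounds.

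Next I would apply this inequality to each virtual queue. For $X_{k}(t)$ in \eqref{virtual queue 2}, taking $a=Q_{k}(t+1)$ and $b=W_{k}^{{\rm av}}\lambda_{k}$ gives $X_{k}(t+1)^2 - X_{k}(t)^2 \le Q_{k}(t+1)^2 + (W_{k}^{{\rm av}}\lambda_{k})^2 + 2X_{k}(t)\big(Q_{k}(t+1) - W_{k}^{{\rm av}}\lambda_{k}\big)$; I then bound $Q_{k}(t+1)^2 \le Q_{{\rm max}}^2$ using the finite buffer size and substitute $Q_{k}(t+1) = Q_{k}(t) - \mu_{k}(t) + A_{k}(t)$ from \eqref{real queue dynamics} into the cross term so that it matches the $X_{k}$-part of $G(t)$. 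For $Y_{k}(t)$ in \eqref{virtual queue 3}, taking $a=P(t)$ and $b=P_{{\rm av}}$ gives $Y_{k}(t+1)^2 - Y_{k}(t)^2 \le P_{{\rm max}}^2 + P_{{\rm av}}^2 + 2Y_{k}(t)(P(t)-P_{{\rm av}})$, using $P(t)\le P_{{\rm max}}$.

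Then I would sum over $k\in\mathcal{K}$, weight the $Y_{k}$ contributions by $\omega$, and multiply by $\frac{1}{2}$ so as to match the definition of $L(\boldsymbol{\Theta}(t))$ in \eqref{Lyapunov function}: the constant terms collect exactly into $\frac{1}{2}D$ with $D$ as in \eqref{L2-2}, and the remaining terms collect exactly into $G(t)$ as in \eqref{L2-3}. This yields the sample-path bound $L(\boldsymbol{\Theta}(t+1)) - L(\boldsymbol{\Theta}(t)) \le \frac{1}{2}D + G(t)$. Finally, taking the conditional expectation of both sides given $\boldsymbol{\Theta}(t)$ and using that $D$ is a deterministic constant delivers \eqref{L2-1}.

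The only point requiring care is the bound $Q_{k}(t+1)^2 \le Q_{{\rm max}}^2$: it relies on treating $Q_{{\rm max}}$ as a hard buffer cap (arrivals exceeding the free buffer space are discarded), since otherwise the Poisson arrivals $A_{k}(t)$ are unbounded and the constant $D$ in \eqref{L2-2} would not be finite. Everything else is routine algebra — expanding squares, regrouping the constant and the $\boldsymbol{\Theta}(t)$-dependent parts, and pulling the deterministic constant out of the expectation.
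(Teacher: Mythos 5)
Your proposal is correct and follows essentially the same route as the paper: square each virtual-queue update using $(\max[x,0])^2 \le x^2$ and $\max[x-y,0]\le x$, bound $Q_{k}(t+1)^{2}$ by $Q_{\rm max}^{2}$ and $P(t)^{2}$ by $P_{\rm max}^{2}$, sum with the weight $\omega$, halve, and take the conditional expectation. Your remark that the finiteness of $D$ hinges on the buffer cap $Q_{\rm max}$ is the same assumption the paper invokes, so nothing is missing.
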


\begin{proof}
By squaring the equation \eqref{virtual queue 2}, we have
\begin{eqnarray}\label{PL2-1}
   &&\!\!\!\! X_{k}(t+1)^{2} - X_{k}(t)^{2}  \nonumber\\
   &&\!\!\!\! = (\max[X_{k}(t) - \lambda_{k}W_{k}^{{\rm av}}, 0] + Q_{k}(t+1))^{2} - X_{k}(t)^{2} \nonumber\\
   &&\!\!\!\! \leq Q_{k}(t+1)^{2}+(\lambda_{k}W_{k}^{{\rm av}})^{2} + 2X_{k}(t)(Q_{k}(t+1)-\lambda_{k}W_{k}^{{\rm av}}) \nonumber\\
\end{eqnarray}
where we use the fact that for any $x,y\geq 0$, $(\max[x,0])^{2}$ $\leq x^{2}$ and $\max[x-y,0] \leq x$ in the inequality.

Similarly, it can be shown for any $k \in \mathcal{K}$
\begin{equation}\label{PL2-2}
  Y_{k}(t+1)^{2} - Y_{k}(t)^{2} \leq P(t)^{2} + P_{{\rm av}}^{2} + 2Y_{k}(t)(P(t)-P_{{\rm av}})
\end{equation}
Based on \eqref{Lyapunov drift}, \eqref{PL2-1} and \eqref{PL2-2}, we have
\begin{align}\label{PL2-4}
  &\Delta(\boldsymbol{\Theta}(t)\nonumber\\
  &= \mathbb{E}\Bigg[\frac{1}{2} \sum_{k\in \mathcal{K}} \big[X_{k}(t+1)^{2} - X_{k}(t)^{2}+ \omega Y_{k}(t+1)^{2} - \omega Y_{k}(t)^{2}\big]|\boldsymbol{\Theta}(t)\Bigg]\nonumber\\
  &\leq \mathbb{E}\Bigg[\frac{1}{2} \sum_{k\in \mathcal{K}} \Big[ Q_{k}(t+1)^{2}+(\lambda_{k}W_{k}^{{\rm av}})^{2} + \omega (P(t)^{2} + P_{{\rm av}}^{2}) \nonumber\\
  &+ 2X_{k}(t)(Q_{k}(t) - \mu_{k}(t) + A_{k}(t) -\lambda_{k}W_{k}^{{\rm av}})\nonumber\\
  &+ 2\omega Y_{k}(t)(P(t)-P_{{\rm av}}) \Big] |\boldsymbol{\Theta}(t) \Bigg]\nonumber\\
  &\leq D + \mathbb{E}\left[ G(t) |\boldsymbol{\Theta}(t) \right]
\end{align}
where $G(t)$ is defined by (\ref{L2-3}) and the last inequality can be obtained by
\begin{eqnarray}\label{PL2-5}
&&\mathbb{E}\Bigg[\sum_{k \in \mathcal{K}}\Big[Q_{k}(t+1)^{2} + (\lambda_{k}W_{k}^{{\rm av}})^{2} +\omega (P(t)^{2} + P_{{\rm av}}^{2})\Big] |\boldsymbol{\Theta}(t) \Bigg] \nonumber\\
&& \leq \mathbb{E}\Bigg[\sum_{k \in \mathcal{K}}\Big[Q_{{\rm max}}^{2} + (\lambda_{k}W_{k}^{{\rm av}})^{2} +\omega (P_{{\rm max}}^{2} + P_{{\rm av}}^{2}) \Big] |\boldsymbol{\Theta}(t) \Bigg]\nonumber\\
&& = \sum_{k \in \mathcal{K}}\Big[Q_{{\rm max}}^{2} + (\lambda_{k}W_{k}^{{\rm av}})^{2} + \omega (P_{{\rm max}}^{2} + P_{{\rm av}}^{2}) \Big] = D
\end{eqnarray}
where the inequality holds based on $Q_{k}(t+1) \leq Q_{{\rm max}}$ and $P(t) \leq P_{{\rm max}}$, and the equality holds since the constant in the square bracket is independent of $\boldsymbol{\Theta}(t)$.
\end{proof}

Based on Lemma \ref{L2}, the problem \eqref{subeq:1} can be simplified to minimize the drift upper bound, i.e., the right-hand-side of inequality \eqref{L2-1}.
We notice that the control actions are independent of the first term and only affect the second term on the right-hand-side of the inequality \eqref{L2-1}.
Thus, the objective turns to the minimization of the expression $\mathbb{E}[G(t)|\boldsymbol{\Theta}(t)]$.
This conditional expectation is with respect to the virtual queue vector $\boldsymbol{\Theta}(t)$ and the possible control actions.
Then, using the concept of opportunistically minimizing an expectation \cite{Neely-2010}, the control actions are chosen to minimize $G(t)$ by observing $\boldsymbol{\Theta}(t)$ and $\mathbf{Q}(t)$ at each slot $t$.
Next, isolating $\boldsymbol{\mu}(t)$ and $P(t)$ in \eqref{L2-3} leads to the following expression
\begin{align}\label{drift-plus-penalty-1}
      \sum_{k \in \mathcal{K}} \Big[\omega Y_{k}(t)P(t) - X_{k}(t)\mu_{k}(t)\Big].
\end{align}
Therefore, the intractable stochastic optimization problem \eqref{subeq:1} can be transformed into a deterministic optimization problem at each slot, which is expressed by
\begin{subequations}\label{subeq:2}
      \begin{align}
      \max\limits_{P, \boldsymbol{\mu}}~~&\sum_{k \in \mathcal{K}} \Big[X_{k}\mu_{k} - \omega Y_{k}P \Big] \label{subeq:2-1}\\
      \mbox{s.t.}~~&~~0 \leq P \leq P_{{\rm max}}\label{subeq:2-1}\\
       ~~&~~0 \leq \mu_{k} \leq Q_{k}, ~\mu_{k}\in \mathbb{N}, ~\forall k\label{subeq:2-2}\\
               ~~&~~\sum_{k \in \mathcal{K}}\mu_{k} \leq C= \left\lfloor\frac{1}{\eta}\log_{2}\Big(1+\frac{P}{N}\Big)\right\rfloor\label{subeq:2-3}
      \end{align}
\end{subequations}
Note that the time index is omitted in problem \eqref{subeq:2} for brevity.
Let $\boldsymbol{\mu}^{\ast}=(\mu_{1}^{\ast},\ldots,\mu_{K}^{\ast})$ and $P^{\ast}$ denote the optimal resource allocation action vector and the optimal power control action for problem \eqref{subeq:2}, respectively.

\section{Dynamic Resource Management Algorithm}\label{sec:5}
The problem \eqref{subeq:2} is a mixed integer programming (MIP) problem. A common way of solving it is to relax the integer constraints and then the problem becomes a convex optimization problem, which can be solved by CVX \cite{Grant-2014}. In addition, the optimization solvers, such as CPLEX and LINDO, have been successfully applied to MIP problems.
However, the above methods often have a high computational complexity.
Thus, to overcome this disadvantage, we consider the problem transformation and then propose a static resource management algorithm to effectively solve problem \eqref{subeq:2}. Finally, a dynamic resource management algorithm is proposed to solve the original problem \eqref{subeq:1}.

\subsection{Static Resource Management Algorithm}
In this subsection, the problem \eqref{subeq:2} is equivalently transformed into a single variable problem, which will be discussed below.
First, we focus on analyzing the constraints in problem \eqref{subeq:2}.
Specifically, we notice that the optimal solution will always achieve the equality in constraint \eqref{subeq:2-3}, which can be given by
\begin{equation}\label{equality}
  \sum_{k \in \mathcal{K}}\mu_{k} = C = \frac{1}{\eta}\log_{2}\Big(1+\frac{P}{N}\Big).
\end{equation}
Otherwise we can reduce the value of $P$ so as to increase the objective value without any violation of the constraints (\ref{subeq:2-1}) and (\ref{subeq:2-2}).
Based on the first equality in \eqref{equality} and the constraint \eqref{subeq:2-2}, the link capacity $C$ should
satisfy $0 \leq C \leq \sum_{k}Q_k$.
In addition, since the link capacity $C$ is the sum of integers, it is also an integer, i.e., $C \in \mathbb{N}$.
From the second equality in \eqref{equality}, there exists a one-to-one relationship between $P$ and $C$.
Thus, the power consumption $P$ can be expressed by
\begin{equation}\label{one-to-one relationship}
  P = N(2^{\eta C}-1).
\end{equation}
Based on \eqref{one-to-one relationship}, the constraint (\ref{subeq:2-1}) is equivalent to $0 \leq C \leq C_{\text{max}}$,
where $C_{\text{max}} \triangleq \frac{1}{\eta} \log_{2}\left(1+\frac{P_{\text{max}}}{N}\right)$.
From the above analysis, the link capacity $C$ should satisfy
\begin{equation}\label{sub3-3}
  0 \leq C \leq \min\left(\sum_{k\in\mathcal{K}}Q_k, C_{\text{max}}\right).
\end{equation}

Then, the problem \eqref{subeq:2} can be transformed into a single variable problem as shown below
\begin{subequations}\label{sub3-4}
\begin{align}
      \max\limits_{C \in \mathbb{N}}~& M(C) \triangleq M_{1}(C) - M_{2}(C)\label{sub3-4-1}\\
      \mbox{s.t.}~&\eqref{sub3-3}, \label{sub3-4-2}
\end{align}
\end{subequations}
where $M_{1}(C)$ is given by
\begin{subequations}\label{sub3-5}
\begin{align}
      M_{1}(C) \triangleq \max\limits_{\boldsymbol{\mu}}~& \sum_{k \in \mathcal{K}} X_{k}\mu_{k}\label{sub3-5-1}\\
      \mbox{s.t.}~&~ 0\leq \mu_{k} \leq Q_{k},~\mu_{k}\in \mathbb{N},~\forall k \in \mathcal{K},\label{sub3-5-2}\\
      ~&\sum_{k \in \mathcal{K}}\mu_{k} = C, \label{sub3-5-3}
\end{align}
\end{subequations}
and $M_{2}(C)$ is given by
\begin{equation}\label{sub3-6}
  M_{2}(C) \triangleq \sum_{k \in \mathcal{K}} \omega Y_{k}P = \beta \left(2^{\eta C}-1\right),
\end{equation}
with $\beta \triangleq \omega N\sum_{k \in \mathcal{K}} Y_k$.

\begin{lemma}\label{L3}
Problems \eqref{subeq:2} and \eqref{sub3-4} are equivalent.
\end{lemma}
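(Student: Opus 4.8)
The plan is to prove equivalence by the change of variable $C=\sum_{k\in\mathcal{K}}\mu_k$, showing that \eqref{subeq:2} and \eqref{sub3-4} attain the same optimal value with optimizers in one-to-one correspondence. I would carry this out as two matching reductions, both resting on the structural observations already extracted from \eqref{subeq:2} in the preceding paragraphs.

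\emph{From \eqref{subeq:2} to \eqref{sub3-4}.} Let $(P^\ast,\boldsymbol{\mu}^\ast)$ be optimal for \eqref{subeq:2} and set $C^\ast=\sum_k\mu_k^\ast$. As argued above, at an optimum \eqref{subeq:2-3} is tight and $P^\ast$ is driven to the least power supporting $C^\ast$ packets, i.e.\ $P^\ast=N(2^{\eta C^\ast}-1)$ as in \eqref{one-to-one relationship}; since $C^\ast$ is a sum of nonnegative integers it lies in $\mathbb{N}$, and $\mu_k^\ast\le Q_k$ together with $P^\ast\le P_{\max}$ yields $0\le C^\ast\le\min(\sum_kQ_k,C_{\max})$, i.e.\ \eqref{sub3-3}. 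Substituting $P^\ast=N(2^{\eta C^\ast}-1)$ into the objective turns it into $\sum_kX_k\mu_k^\ast-\beta(2^{\eta C^\ast}-1)=\sum_kX_k\mu_k^\ast-M_2(C^\ast)$, in which the second term is independent of $\boldsymbol{\mu}$; moreover $\boldsymbol{\mu}^\ast$ must be optimal for the inner allocation problem \eqref{sub3-5} at $C=C^\ast$, since otherwise $(P^\ast,\boldsymbol{\mu}')$ with a better inner allocation $\boldsymbol{\mu}'$ would remain feasible for \eqref{subeq:2} and strictly improve the objective. Hence $\sum_kX_k\mu_k^\ast=M_1(C^\ast)$, so $C^\ast$ is feasible for \eqref{sub3-4} with value $M(C^\ast)$ equal to the optimum of \eqref{subeq:2}.

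\emph{From \eqref{sub3-4} to \eqref{subeq:2}.} Conversely, let $C^\ast$ be optimal for \eqref{sub3-4}. Since $0\le C^\ast\le\sum_kQ_k$, the feasible set of \eqref{sub3-5} is nonempty and finite, so $M_1(C^\ast)$ is attained at some $\boldsymbol{\mu}^\ast$; put $P^\ast=N(2^{\eta C^\ast}-1)$. Then $0\le P^\ast\le P_{\max}$ because $C^\ast\le C_{\max}$, the constraints \eqref{subeq:2-2} hold, and $\lfloor\frac{1}{\eta}\log_2(1+P^\ast/N)\rfloor=C^\ast=\sum_k\mu_k^\ast$ gives \eqref{subeq:2-3}; the objective of \eqref{subeq:2} at $(P^\ast,\boldsymbol{\mu}^\ast)$ is $M_1(C^\ast)-M_2(C^\ast)=M(C^\ast)$. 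Combining the two reductions, the optimal values of \eqref{subeq:2} and \eqref{sub3-4} coincide and the maps $C^\ast\mapsto(P^\ast,\boldsymbol{\mu}^\ast)$ and $(P^\ast,\boldsymbol{\mu}^\ast)\mapsto C^\ast$ are mutually inverse on the optimal sets, which is the asserted equivalence.

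\emph{Main obstacle.} The only point needing care is justifying that an optimizer of \eqref{subeq:2} may be taken in the reduced form $P=N(2^{\eta C}-1)$, $\sum_k\mu_k=C$. Because each $\omega Y_k\ge0$, the term $-\sum_k\omega Y_kP$ is nondecreasing as $P$ decreases, so for a fixed allocation $\boldsymbol{\mu}$ we may lower $P$ down to $N(2^{\eta\sum_k\mu_k}-1)$ — any smaller $P$ makes $\lfloor\frac{1}{\eta}\log_2(1+P/N)\rfloor<\sum_k\mu_k$ and violates \eqref{subeq:2-3} — which weakly improves the objective and forces $\frac{1}{\eta}\log_2(1+P/N)=\sum_k\mu_k$ exactly. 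When $\beta=\omega N\sum_kY_k>0$ this is in fact the unique optimal power given $\boldsymbol{\mu}$; when $\beta=0$ it is merely one optimal choice, and this degenerate case needs a one-line separate mention. Everything else — substituting \eqref{one-to-one relationship}, identifying the inner maximum with $M_1(C)$, and reading off the range \eqref{sub3-3} — is routine bookkeeping.
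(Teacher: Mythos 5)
Your proof is correct and follows essentially the same route as the paper: both rest on the observation that the capacity constraint can be taken tight at an optimum, giving the one-to-one substitution $P=N(2^{\eta C}-1)$ and the decomposition of the objective into $M_1(C)-M_2(C)$. You simply carry out the two reductions more explicitly (and handle the degenerate case $\beta=0$, where the reduced power is only one of several optimal choices) than the paper's terser argument.
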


\begin{proof}
We prove the equivalence from both the objective function and the constraints.
On one hand, we can observe that the objective functions of problem \eqref{subeq:2} and problem \eqref{sub3-4} are same, although the objective function of problem \eqref{sub3-4} is divided into two parts, i.e., \eqref{sub3-5-1} and \eqref{sub3-6}.
On the other hand, as for the constraints on the variable $\mu_{k}$, the constraint \eqref{subeq:2-2} is the same as the constraint \eqref{sub3-5-2} and the constraint \eqref{subeq:2-3} is equivalent to \eqref{sub3-5-3} based on the necessary condition of optimality \eqref{equality}.
As for the constraints on the variable $P$, based on the expression \eqref{one-to-one relationship}, the constraint $0 \leq P \leq P_{{\rm max}}$ in \eqref{subeq:2-1} is equivalent to $0 \leq C \leq C_{\text{max}}$ in \eqref{sub3-3}.
From the above analysis, we can conclude that problem \eqref{subeq:2} is equivalent to problem \eqref{sub3-4}.
\end{proof}

Thus, we can solve the problem \eqref{sub3-4} instead of \eqref{subeq:2}.
Let $C^{\ast}$ denote the optimal solution of the problem \eqref{sub3-4}.
Before solving the problem \eqref{sub3-4}, we first focus on the subproblem \eqref{sub3-5} with any given $C$.
It is worth noting that the maximum value of $M_{1}(C)$ can always be achieved by allocating link capacity $C$ to the services in the descending order of $X_{k}$.
For convenience, all the services are sorted in descending order of $X_{k}$ with the set $\{k_{1},k_{2},\ldots,k_{K}\}$.
Mathematically, the optimal solution to the subproblem \eqref{sub3-5} is given by
\begin{equation}\label{sub2-3}
  \mu_{k_{n}} = \min\left\{\max\left\{C- \sum_{m=0}^{n-1} Q_{k_{m}}, 0\right\}, Q_{k_{n}}\right\},~\forall n,
\end{equation}
where $Q_{k_{0}} =0$.

After solving the subproblem \eqref{sub3-5}, we focus on how to solve the problem \eqref{sub3-4}.
We relax $C \in \mathbb{N}$ to $C \in \mathbb{R}$ in problem \eqref{sub3-4}, and then the property of the objective function $M(C)$ will be exploited in the following lemma.

\begin{lemma}\label{L4}
$M(C)$ is concave over $[0, \sum_{m=0}^{K} Q_{k_{m}}]$.
\end{lemma}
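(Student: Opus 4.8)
The plan is to write $M=M_1-M_2$ and prove separately that $M_1$ is concave and that $M_2$ is convex on $[0,\sum_{k\in\mathcal{K}}Q_k]$ (note that this is the same interval as $[0,\sum_{m=0}^{K}Q_{k_m}]$ since $Q_{k_0}=0$); then $M=M_1+(-M_2)$ is a sum of two concave functions and hence concave, which is the claim. The integrality on $C$ and on $\boldsymbol{\mu}$ is relaxed throughout, as indicated just before the statement, so $M_1(C)$ is understood via the relaxed linear program in \eqref{sub3-5} or, equivalently, via the explicit formula \eqref{sub2-3} evaluated for real $C$.

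The convexity of $M_2$ is routine. By \eqref{sub3-6}, $M_2(C)=\beta(2^{\eta C}-1)$ with $\beta=\omega N\sum_{k\in\mathcal{K}}Y_k$. Here $\omega\ge 0$ by assumption, $N>0$, and every virtual queue satisfies $Y_k(t)\ge 0$ (it is initialized at $0$ and the update \eqref{virtual queue 3} never returns a negative value), so $\beta\ge 0$. Since $\eta>0$, the map $C\mapsto 2^{\eta C}=e^{(\eta\ln 2)C}$ is convex, and a nonnegative multiple of a convex function plus a constant is convex; hence $M_2$ is convex and $-M_2$ is concave.

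The substantive step is the concavity of $M_1$, and I would give it via the standard fact that the optimal value of a linear program is concave in the right-hand side of its equality constraint. Write $M_1(C)=\max\{\sum_{k}X_k\mu_k:\ \boldsymbol{\mu}\in P,\ \sum_{k}\mu_k=C\}$, where $P=\{\boldsymbol{\mu}:0\le\mu_k\le Q_k\ \forall k\}$ is a (convex) box. Take $C_1,C_2$ in the domain with maximizers $\boldsymbol{\mu}^{(1)},\boldsymbol{\mu}^{(2)}$ and $\theta\in[0,1]$; then $\theta\boldsymbol{\mu}^{(1)}+(1-\theta)\boldsymbol{\mu}^{(2)}\in P$ by convexity of the box and $\sum_k(\theta\mu^{(1)}_k+(1-\theta)\mu^{(2)}_k)=\theta C_1+(1-\theta)C_2$, so it is feasible for parameter $\theta C_1+(1-\theta)C_2$; evaluating the linear objective gives $M_1(\theta C_1+(1-\theta)C_2)\ge\theta M_1(C_1)+(1-\theta)M_1(C_2)$, i.e.\ $M_1$ is concave. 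As an alternative (and as a useful sanity check that also explains why $M_1$ is cheap to evaluate), one can argue directly from \eqref{sub2-3}: after sorting $X_{k_1}\ge X_{k_2}\ge\cdots\ge X_{k_K}$, the greedy allocation makes $M_1$ the continuous piecewise-linear function through the origin with slope $X_{k_n}$ on the $n$-th subinterval $[\sum_{m=0}^{n-1}Q_{k_m},\sum_{m=0}^{n}Q_{k_m}]$; since the successive slopes $X_{k_1}\ge X_{k_2}\ge\cdots$ are non-increasing, $M_1$ is concave. If I use this route I would include a one-line exchange argument confirming that \eqref{sub2-3} is optimal.

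I do not anticipate a real obstacle; the only point requiring care is to make the concavity of $M_1$ rigorous rather than merely "clear from the piecewise-linear picture," which is precisely what the LP value-function argument (or the explicit non-increasing-slopes computation) supplies. I would present the LP argument as the main proof and close by noting that the piecewise-linear form of $M_1$ together with the convex $M_2$ is what makes $M$ concave and readily evaluable for the golden-section search used afterwards.
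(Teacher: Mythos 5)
Your proposal is correct, and your \emph{secondary} route (the explicit piecewise-linear form of $M_1$ with non-increasing slopes $X_{k_1}\geq\cdots\geq X_{k_K}$, combined with the convexity of the exponential $M_2$) is essentially the paper's own proof: the paper argues via finite differences, showing $\Delta M_1(C)=\delta X_{k_n}$ is non-increasing on the successive subintervals $[\sum_{m=0}^{n-1}Q_{k_m},\sum_{m=0}^{n}Q_{k_m})$ while $\Delta M_2(C)=\beta 2^{\eta C}(2^{\eta\delta}-1)$ is non-decreasing, so $\Delta M=\Delta M_1-\Delta M_2$ is monotonically decreasing and $M$ is concave. Your \emph{primary} route differs in one genuine respect: you obtain concavity of $M_1$ from the general fact that the optimal value of the linear program \eqref{sub3-5} (with the box constraints $0\leq\mu_k\leq Q_k$ and the budget $\sum_k\mu_k=C$) is a concave function of the right-hand side $C$, via the convex-combination-of-maximizers argument. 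This buys you a proof that does not depend on first establishing that the greedy allocation \eqref{sub2-3} is optimal, and it is more robust (it would survive, e.g., non-box feasible sets), whereas the paper's slope computation is more concrete and is in any case needed later to evaluate $M_1$ cheaply inside the golden-section search. Two small points in your favor: you correctly observe that $\beta=\omega N\sum_k Y_k\geq 0$ (from $\omega\geq 0$ and the nonnegativity of the virtual queues in \eqref{virtual queue 3}) is what makes $M_2$ convex, a hypothesis the paper uses implicitly; and you correctly note that the interval $[0,\sum_{m=0}^{K}Q_{k_m}]$ coincides with $[0,\sum_{k}Q_k]$ since $Q_{k_0}=0$. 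If you present the LP argument as the main proof, do keep the promised one-line exchange argument (or the LP value-function identity) tying \eqref{sub2-3} to the relaxed \eqref{sub3-5}, since the algorithm ultimately evaluates $M_1$ through \eqref{sub2-3}.
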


\begin{proof}
On one hand, for a sufficiently small $\delta>0$, since $\Delta M_{1}(C) = M_{1}(C+\delta) -M_{1}(C) = \delta X_{k_{n}}$ for $\sum_{m=0}^{n-1} Q_{k_{m}} \leq C < \sum_{m=0}^{n} Q_{k_{m}}, \forall n \in [1, K]$, $\Delta M_{1}(C)$ is a non-increasing function of $C$.
On the other hand, since $\Delta M_{2}(C) = M_{2}(C+\delta) -M_{2}(C) = \beta 2^{\eta C}(2^{\eta\delta}-1)$, $\Delta M_{2}(C)$ is a monotonically increasing function of $C$.
Therefore, $\Delta M(C) = \Delta M_{1}(C) - \Delta M_{2}(C)$ is a monotonically decreasing function of $C$, which implies that $M(C)$ is concave over $[0, \sum_{m=0}^{K} Q_{k_{m}}]$.
\end{proof}

Based on \cite{Bertin-1984}, if $M(C)$ is concave, then $M(C)$ is unimodal.
Since $M(C)$ is an unimodal function of $C$ over $[0, \sum_{m=0}^{K} Q_{k_{m}}]$, the golden section search method \cite{Jacoby-1972} is very suitable for searching without derivative for the maximum of objective function $M(C)$ with unimodal.
Then the static resource management algorithm is proposed based on the golden section search method, as described in Algorithm \ref{a-1}.

\begin{algorithm}[!thb]
\caption{Static Resource Management Algorithm}\label{a-1}
\begin{algorithmic}[1]
\REQUIRE $X_{k}$, $Q_{k}$, $\beta$, the golden ratio $\varphi = \frac{\sqrt{5}-1}{2}$;
\STATE Initialize two endpoints, i.e., $\hat{C} = 0$ and $\check{C} = \min\{C_{\text{max}}, \sum_{m=0}^{K} Q_{k_{m}}\}$;
\STATE Determine two intermediate points $C_{1}$ and $C_{2}$ such that $C_{1} = \hat{C} + \varphi(\check{C} - \hat{C})$ and $C_{2} = \check{C} - \varphi(\check{C} - \hat{C})$;
\WHILE{$\check{C} - \hat{C} > \varepsilon$}
\STATE Obtain $M_{1}(C_{1})$ and $M_{1}(C_{2})$ by solving \eqref{sub3-5}, respectively;
\STATE Compute $M_{2}(C_{1})$ and $M_{2}(C_{2})$ based on \eqref{sub3-6};
\STATE $M(C_{1}) = M_{1}(C_{1}) - M_{2}(C_{1})$, $M(C_{2}) = M_{1}(C_{2}) - M_{2}(C_{2})$;
\IF{$M(C_{1}) \geq M(C_{2})$}
\STATE{$\hat{C} := C_{2}$, $C_{2} := C_{1}$, $C_{1} := \hat{C} + \varphi(\check{C} - \hat{C})$;}
\ELSE
\STATE{$\check{C}:= C_{1}$, $C_{1} := C_{2}$, $C_{2} := \check{C} + \varphi(\check{C} - \hat{C})$};
\ENDIF
\ENDWHILE
\STATE{$\tilde{C} := \frac{1}{2}(\check{C} + \hat{C})$;}
\STATE{Obtain $C^{\ast}$ by solving \eqref{optimal solution C};}
\STATE{Calculate $P^{\ast}$ by \eqref{one-to-one relationship} when $C=C^{\ast}$;}
\STATE{Obtain $\mu_{k}^{\ast}$ by \eqref{sub2-3} when $C=C^{\ast}$;}
\ENSURE $P^{\ast}$, $\boldsymbol{\mu}^{\ast}$
\end{algorithmic}
\end{algorithm}

From step 1 to step 13, the golden section search method is used to get the optimal solution while relaxing $C$ as a positive real number in problem \eqref{sub3-4}.
Specifically, two endpoints and two intermediate points in the search region are determined in step 1 and step 2, respectively.
The iterative calculation of the golden section search is implemented from step 3 to step 12 until $\check{C} - \hat{C} \leq \varepsilon$, where $\varepsilon$ is the iterative accuracy.
In each iteration, $M(C_{1})$ and $M(C_{2})$ are calculated from step 4 to step 6 and then are evaluated from step 7 to step 11.
The endpoints and intermediate points are updated during the evaluation.
When the iteration converges, the optimal solution $\tilde{C}$ after relaxation is obtained in step 13.
In step 14, the optimal solution of problem \eqref{sub3-4} is obtained by considering the integer nature of optimal variable $C$.
Due to the concavity of $M(C)$ based on Lemma \ref{L4}, $M(C)$ has a non-negative slope at $C = \lfloor\tilde{C}\rfloor$. Thus, $M(\lfloor\tilde{C}\rfloor) \geq M(\lfloor\tilde{C}\rfloor - \epsilon)$ for $0 \leq \epsilon \leq \tilde{C}$.
Similarly, we have $M(\lceil\tilde{C}\rceil) \geq M(\lceil\tilde{C}\rceil + \epsilon)$ for $0 \leq \epsilon \leq \tilde{C}$ due to the non-positive slope of $M(C)$ at $C = \lceil\tilde{C}\rceil$.
Thus, the optimal integer value of $C$ is either $\lfloor\tilde{C}\rfloor$ or $\lceil\tilde{C}\rceil$. Mathematically, it can be obtained by
\begin{equation}\label{optimal solution C}
  C^{\ast} = \argmax \limits_{C \in \{\lfloor\tilde{C}\rfloor,\lceil\tilde{C}\rceil\}} M(C).
\end{equation}
Finally, the optimal solutions $P^{\ast}$ and $\boldsymbol{\mu}^{\ast}$ can be obtained in step 15 and step 16, respectively.

\begin{figure*}[!htb]
  \includegraphics[scale=1.08]{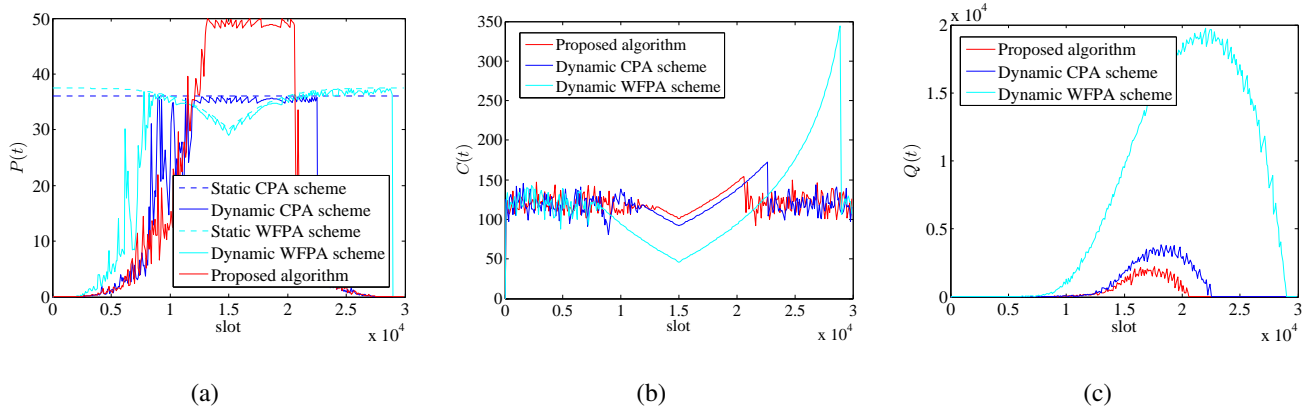}
\caption{The dynamic performance along the time, where $\lambda_{k}=20$ packets/slot, $W_{k}^{\rm av} =15$ slots, $P_{{\rm max}} = 50~W$ and $\omega = 0.8$.}
\label{CPQ}
\end{figure*}

\subsection{Dynamic Resource Management Algorithm}
In this subsection, we propose a dynamic resource management algorithm to solve the original problem \eqref{subeq:1} based on the static resource management algorithm.
Specifically, by observing the queue states at each slot, the dynamic algorithm is designed to choose control actions via solving problem \eqref{sub3-4}.
The detailed steps are described in Algorithm \ref{a-3}.
All system parameters should be initialized before the transmission process begins.
At the beginning of each slot, the problem \eqref{sub3-4} is solved by calling Algorithm \ref{a-1}.
At the end of each slot, the queues $Q_{k}(t+1)$, $X_{k}(t+1)$, and $Y_{k}(t+1)$ are updated according to \eqref{real queue dynamics}, \eqref{virtual queue 2} and \eqref{virtual queue 3}, respectively.
The algorithm will be repeated until all service transmissions are finished.

\begin{algorithm}[!thb]
\caption{Dynamic Resource Management Algorithm}\label{a-3}
\begin{algorithmic}[1]
\STATE Initialize $T_{s}$, $B$, $N_{0}$, $\omega$, $\eta$, $Q_{k}(0) = X_{k}(0) = Y_{k}(0) = 0$ for all $k$;
\STATE Obtain the trajectory information $d(t)$;
\FOR{$t = 0$ to $T$}
\STATE Calculate $\beta(t)$, $N(t)$, and $C_{\text{max}}(t)$;
\STATE Obtain $P(t)$ and $\boldsymbol{\mu}(t)$ by calling Algorithm \ref{a-1};
\STATE Update $Q_{k}(t+1)$, $X_{k}(t+1)$, and $Y_{k}(t+1)$ according to \eqref{real queue dynamics}, \eqref{virtual queue 2}, and \eqref{virtual queue 3}, respectively;
\ENDFOR
\end{algorithmic}
\end{algorithm}

\section{Simulation Results and Discussions}\label{sec:6}

\subsection{Simulation Setup}
We consider a real train schedule based on the Huhang high-speed railway \cite{Liang-2012, Chen-2014}.
The simulations in this paper are built on the train G7302 and the train trajectory is generated according to the mobility model proposed in \cite{Ahmad-2010}.
The other parameters are summarized in Table \ref{Table1}.

\begin{table}[!htb]
\renewcommand{\arraystretch}{1.0}
\caption{Simulation parameters}\label{Table1}
\centering
\begin{tabular}{lll}
\hline\noalign{\smallskip}
Parameter & Description & Value  \\
\noalign{\smallskip}\hline\noalign{\smallskip}
$P_{\text{av}}$& maximum average power & 36 W \\
$B$& system bandwidth &$5$ MHz \\
$L$& packet size &$240$ bits \\
$T_{s}$&slot duration&$1$ ms \\
$\alpha$& pathloss exponent &$4$ \\
$N_{0}$& noise power spectral density & -174 dBm/Hz \\
$v$& constant moving speed &$360$ km/h \\
$R$& cell radius & 1.5 km \\
$d_{0}$& distance between BS and rail &50 m \\
$K$& number of services&$6$\\
\noalign{\smallskip}\hline
\end{tabular}
\end{table}

For the purpose of comparison, we evaluate two related static power allocation schemes as reference benchmarks,
i.e., constant power allocation (CPA) scheme and water filling power allocation (WFPA) scheme \cite{Dong-2014}.
In the static CPA scheme, BS maintains a constant transmit power at all times, i.e., $P(t) = P_{\rm av}$.
In the static WFPA scheme, the water filling method is used to maximize the total throughput along the time.
Since the power allocation in these two static schemes is determined in advance, we modify them to the corresponding dynamic schemes in order to adapt to the variations of data traffic and channel state.
Specifically, the dynamic schemes can be obtained by replacing the maximum power $P_{\text{max}}$ in proposed dynamic algorithm with the static power allocation results.
The resultant dynamic schemes are denoted as dynamic CPA scheme and dynamic WFPA scheme, respectively.

\subsection{Performance Comparison}
Fig. \ref{CPQ} shows the dynamic performance along the time for different power allocation schemes.
For the sake of performance comparison, we only plot the simulation results during a time period when the train moves from the center of one cell to that of adjacent cell.

Fig. \ref{CPQ}(a) shows the power allocation along the travel time for different schemes.
We can see that the predetermined power allocation results in the static CPA scheme and static WFPA scheme are independent of random packet arrivals. Considering the packet arrival process, the transmit power changes dynamically in the three dynamic schemes.
Specifically, when the train moves towards the cell edge, the power consumption increases in all dynamic schemes since the wireless link quality degrades.
When the train moves at the cell edge (from $1 \times 10^{4}$th slot to $2 \times 10^{4}$th slot), nearly maximum transmit power is consumed in the proposed algorithm, while the power consumptions in the dynamic CPA scheme and dynamic WFPA scheme are limited by the predetermined power allocation.

Fig. \ref{CPQ}(b) and \ref{CPQ}(c) show the instantaneous link capacity and average queue backlog of services for the three dynamic schemes, respectively.
It can be observed in Fig. \ref{CPQ}(b) that when the train moves at the cell center (from $0$ to $10^{4}$th slot), the link capacity is just around the total packet arrivals 120 packets/slot for all schemes, which results in a small queue backlog shown in Fig. \ref{CPQ}(c).
This result can be explained as follows: Since the channel condition is good at the cell center, little power will be consumed shown in Fig. \ref{CPQ}(a). Thus, the packets can be transmitted immediately once they arrive at the buffer, which implies that the queue backlog is small.
As the train moves far from the cell center, the channel condition turns bad and much power will be consumed for transmitting one packet. The link capacity decreases for all schemes due to the power constraint, which causes the increasing queue backlog.
As shown in Fig. \ref{CPQ}(b), when the train locates near the cell edge, the link capacity in the proposed algorithm is more than that in the other two dynamic schemes due to the different power allocations in Fig. \ref{CPQ}(a).
The difference in the link capacity results in different queue backlogs among these dynamic schemes.
Furthermore, we can see from Fig. \ref{CPQ}(b) that the three curves suddenly drop after steady increasing, which implies the queue backlogs have been emptied. Compared with the other two schemes, less time will be spent on emptying queue backlogs and less buffer size is needed in the proposed algorithm.

Fig. \ref{PDA} shows the average power consumption and average delay performance with different packet arrival rates, respectively.
As expected, the average power consumption in all the schemes increases with the average packet arrival rates. Transmitting more packets with the same delay requirement will lead to more power consumption. However, the increment in the proposed algorithm is large while that in the other two schemes is small.
This is because that the power consumptions in the dynamic CPA scheme and dynamic WFPA scheme are limited by the predetermined power allocation.
From Fig. \ref{PDA}(b), we can see that the average delay in all the schemes also increases with the average packet arrival rates. When the packet arrival rate increases, the queue backlog gets larger, which further results in longer queue delay.
Furthermore, we observe that as for the same packet arrival rate, the average delay in the proposed algorithm is much lower than the other two schemes.
Specifically, when the packet arrival rate is 25 packets/slot, the average delay in the proposed algorithm can respectively be 6.3$\%$ and 22.2$\%$ of that in the other two schemes, which demonstrates that the proposed algorithm outperforms the other two schemes in term of delay performance.


\begin{figure}[!htb]
  \centering
  \subfigure[]{
    \label{P-Arrival} 
    \includegraphics[width=0.95\linewidth]{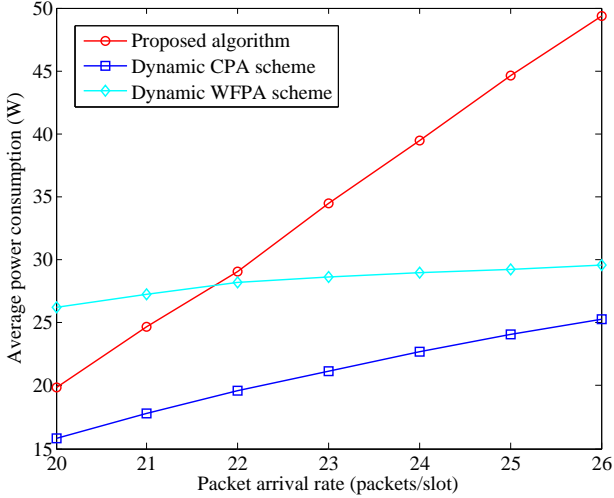}}
  \subfigure[]{
    \label{D-Arrival} 
    \includegraphics[width=0.95\linewidth]{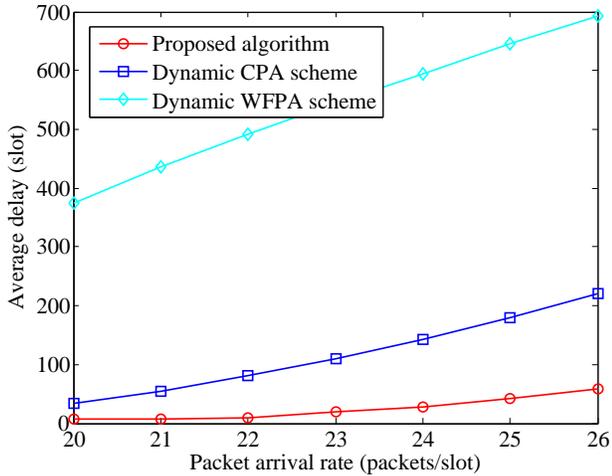}}
  \caption{Average power consumption and average delay under different packet arrival rates, where $\lambda_{k}=20$ packets/slot, $W_{k}^{\rm av} =15$ slots, $P_{{\rm max}} = 100~W$ and $\omega = 0.8$.}
  \label{PDA}
\end{figure}

\begin{figure}[!tbh]
    \centering
    \setlength{\abovecaptionskip}{-0.06cm}
    \includegraphics[scale = 0.48]{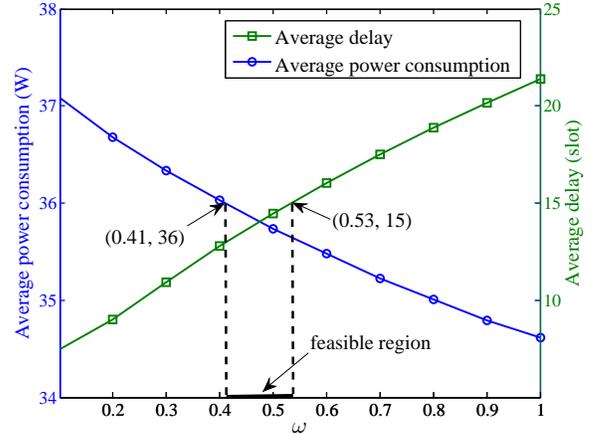}
    \caption{Average power consumption and average delay under different weights, where $\lambda_{k} = 23$ packets/slot, $W_{k}^{{\rm av}} = 15$ slots, and $P_{\rm max} = 100~W$.}\label{P-D}
\end{figure}

\subsection{Effects of the Parameter $\omega$}
We show the effects of the weight $\omega$ on the power consumption and delay performance in the proposed algorithm.
Fig. \ref{P-D} plots the average power consumption and average delay versus the weight $\omega$.
It can be seen that the average power consumption decreases with increasing $\omega$ while the average delay increases with increasing $\omega$. The reason is that increasing $\omega$ leads to more weight putting on average power consumption constraint and hence less power is utilized for transmitting buffered packets, which results in longer queue delay.
Thus, we can see that $\omega$ plays a key role in balancing the average delay and average power consumption.
Furthermore, to satisfy the average delay constraint and average power constraint simultaneously, it is necessary to find the reasonable range of $\omega$, which is called as ``feasible region".
As shown in Fig. \ref{P-D}, we can find the feasible region of $\omega$ is $[0.41, 0.53]$ such that both the average delay constraint and average power constraint can be satisfied simultaneously.

\subsection{Effects of the Maximum Transmit Power}
We evaluate how the maximum transmit power $P_{\rm max}$ effects the delay performance and power consumption in the proposed algorithm.
Fig. \ref{P-D-PMAX} shows average power consumption and average delay with different maximum transmit powers, respectively.
As shown in Fig. \ref{P-D-PMAX}, the average power consumption increases with the maximum transmit power while the average delay decreases with the maximum transmit power.
This can be explained by the observation in Fig. \ref{P-PMAX}.
As the maximum transmit power gets larger, more power is consumed when the train moves at the cell edge, resulting that the buffered packets can be transmitted as soon as possible.

\begin{figure}[!htb]
  \centering
  \subfigure[]{
    \label{P-D-PMAX} 
    \includegraphics[width=0.93\linewidth]{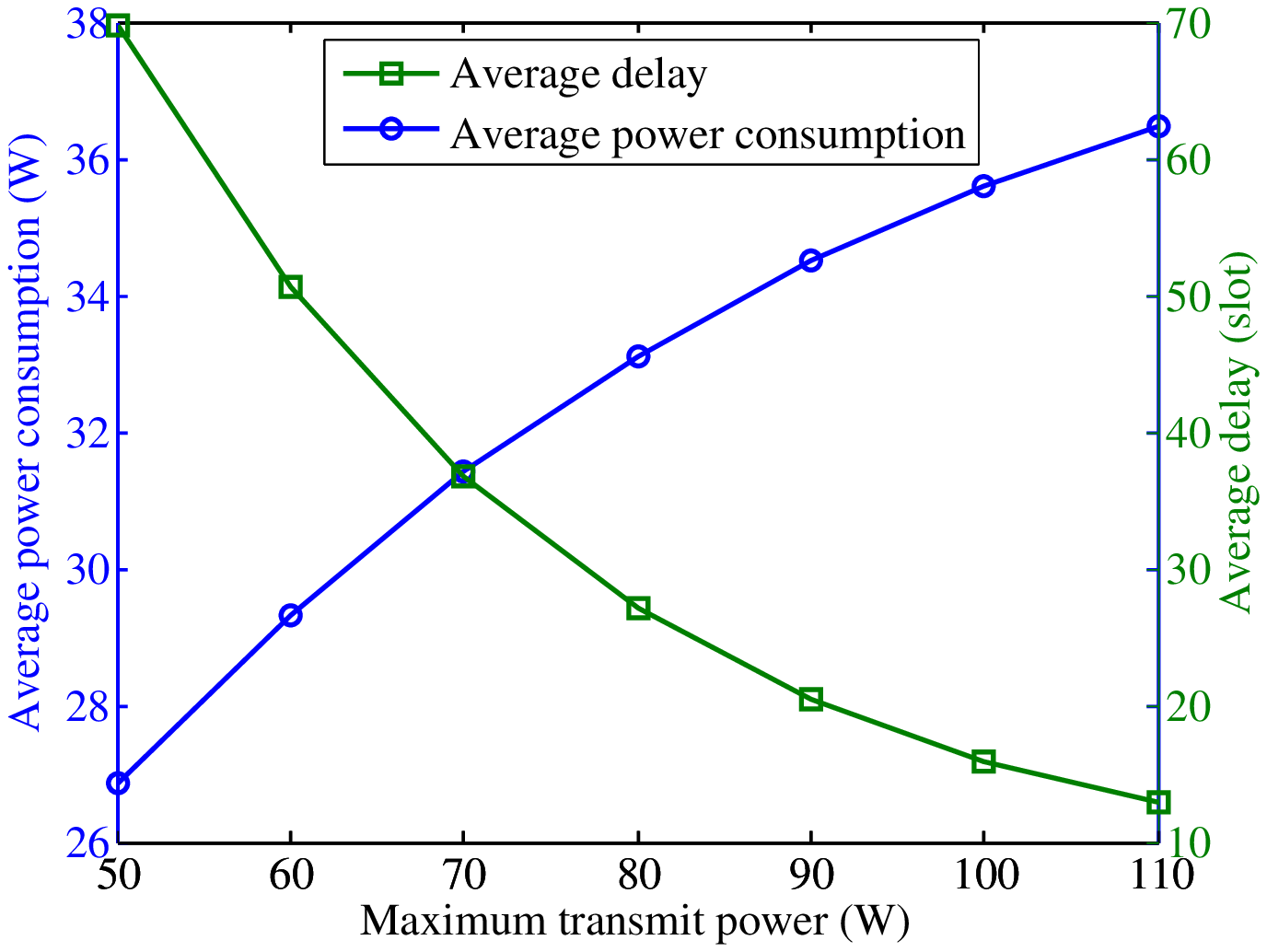}}
  \subfigure[]{
    \label{P-PMAX} 
    \includegraphics[width=0.88\linewidth]{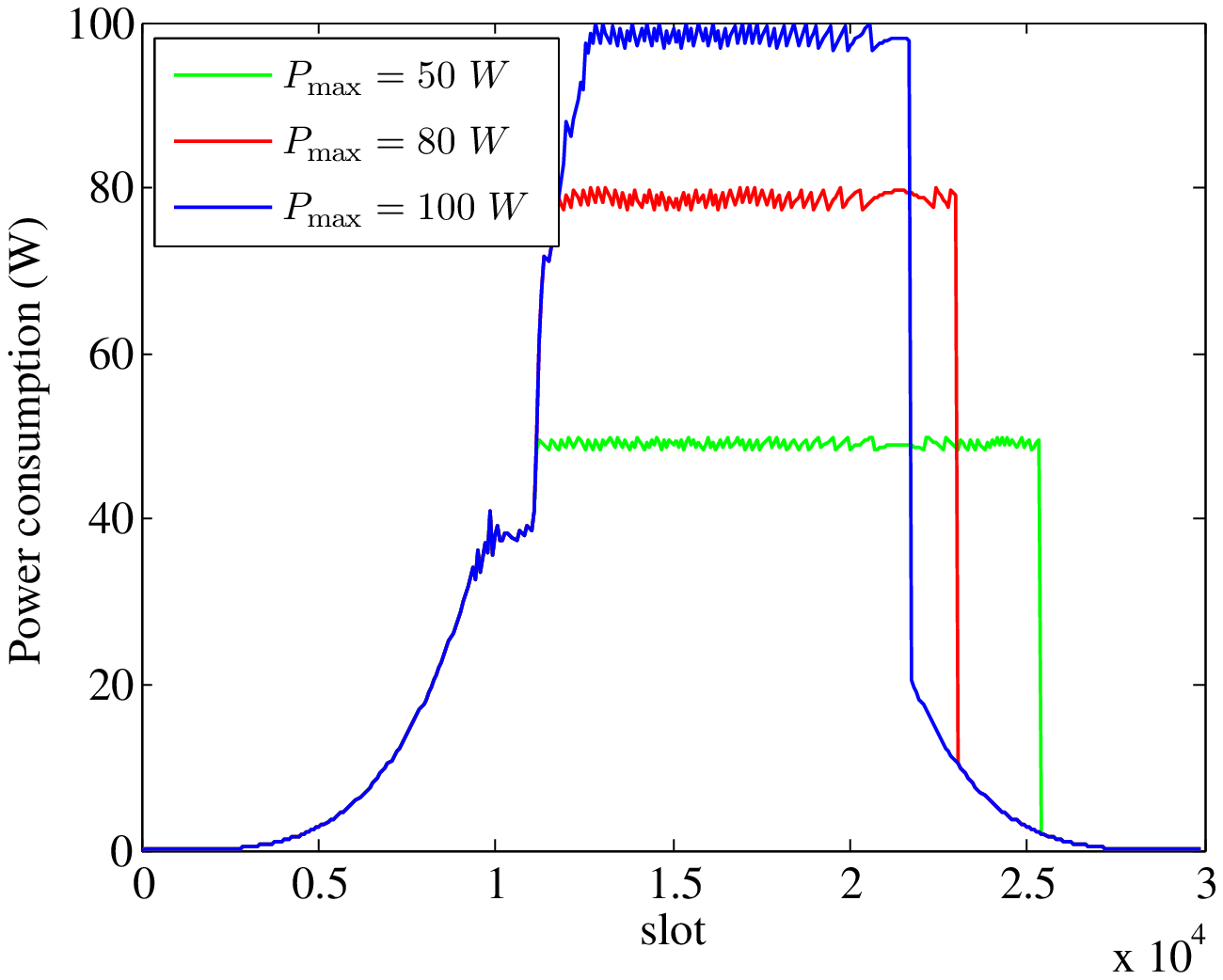}}
  \caption{Power consumption and delay performance under different maximum transmit powers, where $\lambda_{k} = 23$ packets/slot, $W_{k}^{{\rm av}} = 15$ slots and $\omega = 0.6$.}
  \label{Pathloss+Dopplershift} 
\end{figure}

\section{Conclusions}\label{sec:7}
In this paper, we investigate the delay-aware dynamic resource allocation and power control problem in HSR wireless communications.
The problem is formulated into a stochastic optimization problem, rather than pursuing the traditional convex optimization means.
A dynamic resource management algorithm is proposed to solve the intractable stochastic optimization problem.
The novelty of the proposed dynamic algorithm lies in the applications of stochastic network optimization approach and the ideas such as the virtual queue-based constraint transformation and opportunistically minimizing an expectation.
Simulation results are presented to show that the proposed dynamic algorithm can reasonably use the limited resource and significantly improve the delay performance under power constraints in HSR communications.

In the future, we plan to broaden and deepen this work in several directions.
First, we attempt to investigate delay-aware cross-layer design for multi-service transmission with more practical assumptions in HSR communication systems, e.g., multi-service resource management problem under more practical HSR channel model and more realistic packet arrival model.
Second, since the weight parameter plays a key role in balancing the queue delay and power consumption, we plan to analyze theoretically how to obtain the feasible region of the weight parameter.
Third, to address the issue that the delay requirement and power constraint can not be satisfied simultaneously, we would like to design a dynamic admission control scheme for supporting multi-service transmission.
Finally, as for the safety-related services transmission in HSR communications, we also plan to propose a new multi-service transmission scheme, which can take full consideration of different priorities and strict delay requirements.

\bibliographystyle{IEEEtran}
\bibliography{Ref_DSO}
\end{document}